\title{A DP Approach to Hamiltonian Path Problem}
\author{Dmitriy Nuriyev}
\chardef\bslash=`\\ 
\newtheorem{thm}{Theorem}[section]
\newtheorem{lem}[thm]{Lemma}
\theoremstyle{definition}
\newtheorem{defn}{Definition}[section]
\theoremstyle{remark}
\newtheorem{rem}{Remark}[section]
\newcommand{\eval}[2][\right]{\relax
  \ifx#1\right\relax \left.\fi#2#1\rvert}
\begin{document}
\maketitle

\section{Abstract}

A Dynamic Programming based polynomial worst case time and space algorithm is described for computing Hamiltonian Path of a directed graph. Complexity constructive proofs  
along with a tested C++ implementation are provided as well. 
The result is obtained via the use of original colored hypergraph structures in order to maintain 
and update the necessary DP states.

\section{Introduction}

Dynamic Programming hardly requires introduction. Since the term was in introduced by Richard Bellman in 1940's it has countless applications and it's power
allows to have fast algorithms in cases where at a blush there seems to be no way to avoid exponential time. Such examples include graph algorithms:
Bellman-Ford, Floyd-Warshall as well as option pricing, numerical solutions to HJB equations ("backwards in time"), discrete optimal control policies, 
knapsack small block size problem \cite{MT2}, Smith-Waterman local sequence alignment algorithm and many others.

Hamiltonian path (H-path) in directed and undirected graph is one of Karp's 21 NP-complete problems \cite{K}. The question it asks is to find cycle or path
in a given graph which visits every vertex exactly once. Therefore one can see that H-path is the shortest path which visits all nodes in the graph and thus
provides the answer to the transportation problem in question.

David Zuckerman \cite{Z} showed in 1996 that every one of these 21 problems has a constrained optimization 
version that is impossible to approximate within any constant factor unless P = NP, by showing that Karp's approach to reduction generalizes to a specific 
type of approximability reduction. Significant progress has been made in expanding the class of graphs for which polynomial solution does exist: notably 
Ashay Dharwadker \cite{A} discovered an algorithm for a broad class of highly connected graphs.

The intuition behind the presented approach comes from physical objects that seem to have a way to solve some NP-complete problems such as soap bubbles forming (almost) 
minimal surfaces and protein folding satisfying hydrophilic-hydrophobic Boolean conditions. 

Consider a cobweb attached to a tree with one end and being pulled by the other end. One can observe that the lowest (average) strain or highest slack in the cobweb is 
achieved along the longest sequence of web segments. This is of course a Hamiltonian 
path in the graph represented by the cobweb. If one wants to find it then it makes sense to implement some incremental scheme which akin to a difference scheme for 
PDE with boundary condition, computes the strains incrementally starting with one attachment point and moving towards the opposite attachment point.

\section{Definitions}

Denote $G,V(G),E(G)$ - the input directed graph, it's vertices (aka nodes) and edges respectively, $n=|V(G)|$.
Assume $G$ to be a general directed graph which does not have multiple edges between any two vertices and that there are no loop edges.

A given instance of H-path problem is represented by a triple $(G,s,e)$ where $s,e \in V(G)$ and the task is to find path $\pi(s,e) \in V(G)$ starting 
at node $s$ and ending at $e$ so that every vertex in $V(G)$ is visited and only once.

In the following description I shall opt for a less conventional but more compact and practical terminology in describing the main algorithm by
adopting Object-Oriented terminology. The entities used here will be Objects which consist of Attributes - representing data and Methods - representing
functions as well as traditional program functions and variables - these are distinct from object methods and attributes in that they are "static", 
effectively having global scope and accessible from all objects.

If an object $A$ is an attribute of $B$ I typically write $A_{B}$ unless making a statement about any object $A$ or it is clear from the context 
what $B$ is. Also instead of single letter notation I elect mnemonics that are much shorter than full object names but are easier to associate 
with full object names despite sounding weird. 

I start with introducing a concurrent graph traversal method which unlike DFS/BFS (Depth First Search/Breadth First Search) has some useful properties in this context. 
This will become a basis for the workflow of main algorithm.

Denote $P(G)$ family of H-paths in $G$ between nodes $s,e$.

\begin{defn}\label{botmaster}
Let $botmaster$ (C++ code: edg) be an object with the following:
\begin{itemize}
	\item Attributes:
		\begin{itemize}
			\item[(1)] $bots$ - family of all $bot$ objects, defined below.
			\item[(2)] march step $marchStep \in 1..n$
			\item[(3)] $dock$ a Boolean variable associated with pair $(v,marchStep)$, $v \in V(G)$    
		\end{itemize}
	\item Methods:
		\begin{itemize}
			\item[(1)] $march$ for all $bot \in bots$ invokes $advance_{bot}$, defined below.
			\item[(2)] $addDelete(v_{1},v_{2})$ (C++: edg::add\_delete) defined below
		\end{itemize}
\end{itemize}
\end{defn}

\begin{defn}\label{bot}
Let $bot$ be an object invoked by master object $botmaster$, which on march step $marchStep$ conditionally transitions from node $a$ to its neighbor $b$ 
along edge $ab$, denote that fact as $bot(a,s) \rightarrow b$. Bot's only attribute is current node $a \in V(G)$ and it has
\begin{itemize}
\item Methods:
	\begin{itemize}
	\item[i] $advance$: 
		\begin{itemize}
			\item[(1)] $bot(a,s) \rightarrow b \Rightarrow dock(b,s)=F$, $dock(a,s-1)=T$. This means dock acts like mutex for bots - first bot accessing $b$ docks at $b$.
			\item[(2)] $addDelete(a,b)_{botmaster}$ is invoked
			\item[(3)] If $bot(a,s) \rightarrow b_{1}...b_{h}, h>1$, $bot(a,s)$ spawns botlets $bot(b_{1},s+1)...bot(b_{h},s+1)$
		\end{itemize}
	\item[ii] $terminate$ - deletes $bot$ object under the following conditions:
		\begin{itemize}
			\item[(1)] If $\not\exists b: bot(a,s) \rightarrow b$, $bot(a,s)$ is terminated.
			\item[(2)] $bot(a,n-2) \rightarrow e$ and there are no other $v: bot(a,n-2) \rightarrow v$ bot march stops.
		\end{itemize}
	\end{itemize}
\end{itemize}
\end{defn}

\begin{defn}\label{botpath}
Let $hist(bot)$ be the history of nodes visited by $bot$, this includes "genetic memory", i.e. newborn botlet inherits history of it's parent.
\end{defn}

\begin{rem}
There are no more than $n$ bots at any given time.

This follows from correspondence between bots and nodes guaranteed by dock condition \eqref{bot}.
\end{rem}

\begin{rem}\label{bothist}
Let $\pi$ is H-path on $G$. Then there exists a $bot$: $hist(bot)=\pi$.
\end{rem}

\begin{proof}
Assume that $\pi_{1}=(s,...,e_{1}) \subset \pi$, is the longest path traversed by any bot in $|\pi_{1}|$ steps. 
By \eqref{bot}.1 we have a bot docked at $e_{1}$ which will traverse to $e_{2}: (e_{1},e_{2}) \subset \pi$ on step $|\pi_{1}|+1$ thus
violating the assumption of maximality of $\pi_{1}$.
\end{proof}

\begin{rem}\label{botvisits}
Any vertex of $G$ is visited at most $n$ times.

This follows from condition \eqref{bot}.ii.(2)
\end{rem}

\begin{defn}\label{color}
Let
\begin{itemize}
	\item Color $C$ of graph $g$ for node $v \in G$ be an integer corresponding to a class of paths $\{(s...v)\}$ in graph $g$. 
	\item $colors(g)$ be all colors of graph $g$. Also: 
	\item Each color has a unique node it relates to via surjective function: $cono: cono(C) \rightarrow V(g)$ (stands for color nodes: C++ dg::color\_nodes).
	\item Denote $\pi(C)$ set class of paths for color $C$. Note that $|\pi(C)|$ may be exponentially big and therefore this is not explicitly computed by the algorithm.
	\item There is a base color which $s$ is painted in - $emptycolor$
\end{itemize}

\end{defn}

\begin{defn}\label{cohi}
Let,
\begin{itemize}
	\item $cohi$ (color hierarchy, C++ color\_hierarchy) be a directed graph of colors, where edge $c_{1}c_{2}$ denotes that every path over nodes of $g$ in $c_{1}$ is a 
	subpath of some $g$-path in $c_{2}$, has form $\{(c,S_{c}),S_{c} \subset colors(g) \}$
	\item $colors(cohi) \equiv V(cohi)$ returns colors of $cohi$
	\item $suco(C)$ is a global function (C++ sub\_colors) which returns all colors in $cohi$ which have path to $C$, i.e sub-colors of $C$,
	also has form 
	\[
	\{(c,S_{c}), S_{c} \subset \bigcup\limits_{v \in V(G),slack=1..n-1} colors(g_{pagra(v,slack)}) \}
	\]
	\item $noco(v)$ (node colors, C++: dg::node\_colors ) is colors over node $v \in V(g)$, has form $\{(v,S_{v}),S_{v} \subset colors(g)\}$ 
	\item $cn(v)$ (colors for node, C++: dg::cn ) is all colors $C$ over $v$ such that $cono(C)=v$
	\item $top(cohi), base(cohi)$ are respectively first (no ancestors) and last (no descendants) colors in $cohi$ graph.
\end{itemize}
\end{defn}

\begin{lem}\label{cncard}
$\forall v \in g, |cn(v)|=O(n)$
\end{lem}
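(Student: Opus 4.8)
The plan is to identify $cn(v)$ with the fiber $cono^{-1}(v)$ and to bound that fiber by the number of times $v$ can be docked during the bot march, a quantity Remark~\ref{botvisits} already caps at $n$. Since $cono$ is the surjection sending each color to its unique endpoint node (Definition~\ref{color}), the set $cn(v)$ is precisely the colors $C$ with $cono(C)=v$, so it suffices to exhibit an injection from this set into $\{1,\dots,n\}$.

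First I would assign to each color $C\in cn(v)$ the march step at which a bot carrying a representative path of $C$ docks at $v$. Remark~\ref{bothist} guarantees that every relevant $(s\dots v)$-path is realized as the history of some bot, so this step is well defined; and because a representative is a simple path $(s\dots v)$, its length — hence the docking step — lies in $1,\dots,n-1$. The natural index for the classes in $cn(v)$ is thus the march step (equivalently the path length, or the associated $slack$) at which $v$ is reached, and this index takes at most $n$ values.

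The substance of the argument is to show that the assignment $C\mapsto marchStep(C)$ is injective on $cn(v)$, i.e.\ that at most one color with endpoint $v$ is recorded per march step. This is where the mutex semantics of $dock$ carry the proof: by \eqref{bot}.1, at a fixed step $s$ the first bot to access $v$ claims the dock ($dock(v,s)=F$) and no other bot docks at $v$ on that step, so the color registered at $v$ at step $s$ is the single class carried by that one bot's history. Two distinct colors in $cn(v)$ therefore cannot share a step, the map into $\{1,\dots,n\}$ is injective, and $|cn(v)|\le n=O(n)$.

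The step I expect to be the main obstacle is ruling out that \emph{several} colors get attached to $v$ at a single step: although \eqref{bot}.1 fixes a unique docking bot, the spawning of botlets in \eqref{bot}.i.(3) and the coalescing of several incoming paths could in principle register more than one class at $v$ simultaneously. I would close this gap through the class structure of Definition~\ref{color}, arguing that all $(s\dots v)$-paths recognized at a common step collapse into one color, so that the per-step color at $v$ is unique by construction and the injectivity above genuinely holds. Combining this with $marchStep(C)\le n$ gives the stated bound.
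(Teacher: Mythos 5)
Your proposal is correct and follows essentially the same route as the paper: both arguments bound $|cn(v)|$ by the $O(n)$ bot visits to $v$ guaranteed by Remark~\ref{botvisits}, with exactly one new color $C$ satisfying $cono(C)=v$ created per visit. Your injection of colors into march steps (via the $dock$ mutex) is just the reverse direction of the paper's ``each visit generates one new $C$'' correspondence, so the substance is the same.
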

\begin{proof}
Be remark \eqref{botvisits} we have $O(n)$ bit visits to $v$. Each visit performs merge for given $pagra(v,slack)$ which generates new $C: cono(C)=v$. 
\end{proof}

\begin{defn}\label{inactivecolors}
Color $C$ is called inactive: $inactive(C)$ iff 
\begin{itemize}
\item $cono(C)=\emptyset$
\item or $cono(C) \not\in V(g)$
\item or $noco(cono(C))=\emptyset$
\item or $C \not\in top(cohi) \wedge cohi(C)=\emptyset$
\end{itemize}
\end{defn}

\begin{defn}\label{depcolors}
Let $S \in colors(cohi), a \in V(g)$ $dep(S|cn(a))$ is a subset of $S$ such that either 
\[
C \in S, \not\exists \pi(C,top(cohi)) \in cohi: cn(a) \cap \pi(C,top(cohi))=\emptyset  
\]
Or
\[
C \in S, \not\exists \pi(C,base(cohi)) \in cohi: cn(a) \cap \pi(C,base(cohi))=\emptyset  
\]
\end{defn}

\begin{defn}\label{pagra}
For all $a \in V(G), slack \in 1...n$ call path graph $pagra(a,slack)$ (C++: dg) an object which includes all paths $(s...a) \subset G$ of length $slack$ and has:
\begin{itemize}
\item Attributes:
	\begin{itemize}
		\item[(1)]  Directed graph $g$ with nodes in $V(g)$
		\item[(2)]  Node colors $noco(V(g))=\{(n_{k} \rightarrow \{c_{k}^{1},..,c_{k}^{h}\})\}$, $noco$ is a one-to-many map $V \rightarrow C$
		\item[(3)]  Color nodes $cono(C)$ 
		\item[(4)]  Color hierarchy $cohi(g)$ 
	\end{itemize}
\item Methods:
	\begin{itemize}
		\item[(1)]  $addSlack(b), b \not\in V(pagra(a,slack))$, does:
			\begin{itemize}
			\item[i]  adds edge $ab$ to $g_{s}, \forall m$
			\item[ii] invokes $paint$ which performs: 
				\begin{itemize}
					\item[(1)] allocates new C 
					\item[(2)] $\forall v \in V(g), noco(v):=(noco(v),C)$
					\item[(3)] $cono(C):=b$
					\item[(4)] $cohi(C):=top(cohi)$ 
					\item[(4)] $suco(C):=colors(cohi)$
				\end{itemize}
			\end{itemize}
		\item[(2)]  Remove node (C++: dg::rm\_node) $reno(a)$:
			\begin{itemize}
				\item[i] deletes $noco(a)$
				\item[ii] executes $bleach(cn[a])$ which:
					\begin{itemize}
					\item[(1)] computes $D=\{dep(cn[v]|cn[a])| \forall v \in V(g)\}$
					\item[(2)] $\forall C \in D$ remove $C$ from all attributes of $pagra(a,slack)$
					\item[(3)] ensures $\forall C \in colors(cohi), inactive(C)=F$
					\item[(4)] every color $C \in cohi$ has a direct predecessor or $bleach$ removes such $C$
					\item[(5)] if $noco(v)=\emptyset$, remove $v$ from $g$ $\forall v \in V(g)$
					\end{itemize}
			\end{itemize}
		\item[(3)]  Merge $pagra(a)+pagra(b)$: unions are taken for $cohi,cono,noco$, i.e. it has form: 
		\[
		\{(a,S_{a})\}+\{(b,S_{b})\}=\{a,b: a = b \Rightarrow (a,S_{a} \cup S_{b}), otherwise (a,S_{a}),(b,S_{b}) \} 
		\]
	\end{itemize}
\end{itemize}

\end{defn}

\begin{defn}
Denote 
\[
sucoDFS(a,X,suco,dir), a \in colors(cohi), X \subset colors(cohi), dir \in \{up,down\} 
\]
a $pagra$ method (C++: dg::has\_path\_to\_top()) which traverses $cohi$ starting from color $a$, bypassing colors $X$ (C++: dg::xcolors) and 
uses $suco$ to make transitions between nodes as follows:
\begin{itemize}
\item[(1)] $dir=up$ 
	\begin{itemize}
		\item[(1)] run regular DFS against graph $cohi$ starting at color $a$ 
		\item[(2)] when next node is $c \in X$ return to the next available sibling node
		\item[(3)] transition from color vertex $a_{1}$ to $a_{2}$ only if $\pi_{1} \subset suco(a_{2})$ where
			$\pi_{1} \equiv (a...a_{1})$ is a sequence of colors - path - in $cohi$ traversed before $a_{2}$.
		\item[(4)] terminate and return path $\pi$ when next node is in $top(cohi)$ 
		\item[(5)] when all colors have been traversed terminate and return $\emptyset$ 
	\end{itemize}
\item[(2)] $dir=down$
	\begin{itemize}
		\item[(1)] run regular DFS against graph $inv(cohi)$ starting at color $a$ 
		\item[(2)] when next node is $c \in X$ return to the next available sibling node
		\item[(3)] transition from color vertex $a_{1}$ to $a_{2}$ only if $\forall c \in \pi_{1} a_{2} \in suco(v)$
		\item[(4)] terminate and return path $\pi$ when next node is in $base(cohi)$ 
		\item[(5)] when all colors have been traversed terminate and return $\emptyset$ 
	\end{itemize}
\end{itemize}
\end{defn}

\begin{rem}
By definition of $sucoDFS$ we have: 
\[
sucoDFS(a,X,suco,up)=\emptyset \vee sucoDFS(a,X,suco,down)=\emptyset \Rightarrow dep(a|X)=\{a\}
\]
\end{rem}

\begin{defn}\label{inv}
Given directed graph $G$ $inv(G)$ inverts arc orientation.
\end{defn}

\begin{defn}\label{slacks}

Slacks $slacks(a)$ is an object, effectively a wrapper for $pagra(a,slack)$ with:

\begin{itemize}
\item Attributes:
	\begin{itemize}
	\item[(1)] A single node $v \in V(G)$
	\item[(2)] A set of slacks $S$
	\item[(3)] Path graphs $pagra(a,slack), slack \in S$
	\end{itemize}

\item Methods
	\begin{itemize}
	\item[(1)] Add slack - $addSlack(b), b \not\in V(g_{pagra(a,slack)})$, invokes $addSlack(b)_{pagra(b,slack)}, \forall slack$
	\item[(2)] Remove node: $reno(a)$ invokes $reno(a)_{pagra(b,slack)}, \forall slack$
	\item[(3)] Merge $slacks(a)+slacks(b)$: invokes $pagra(a,slack)+pagra(b,slack), \forall slack \in S$
	\end{itemize}
\end{itemize}
\end{defn}

\begin{defn}\label{slacks}
$addDelete(a,b)_{botmaster}$ is a method which performs the following steps:
\begin{itemize}
\item $reno_{slacks(a)}(b)$
\item $addSlack_{slacks(a)}(b)$
\item Merge $slacks$: $slacks(b):=slacks(a)+slacks(b)$
\end{itemize}
\end{defn}

\begin{figure}[H]
\centering
\label{fig:obj}
\caption{Main objects}
\includegraphics[width=130mm, height=100mm]{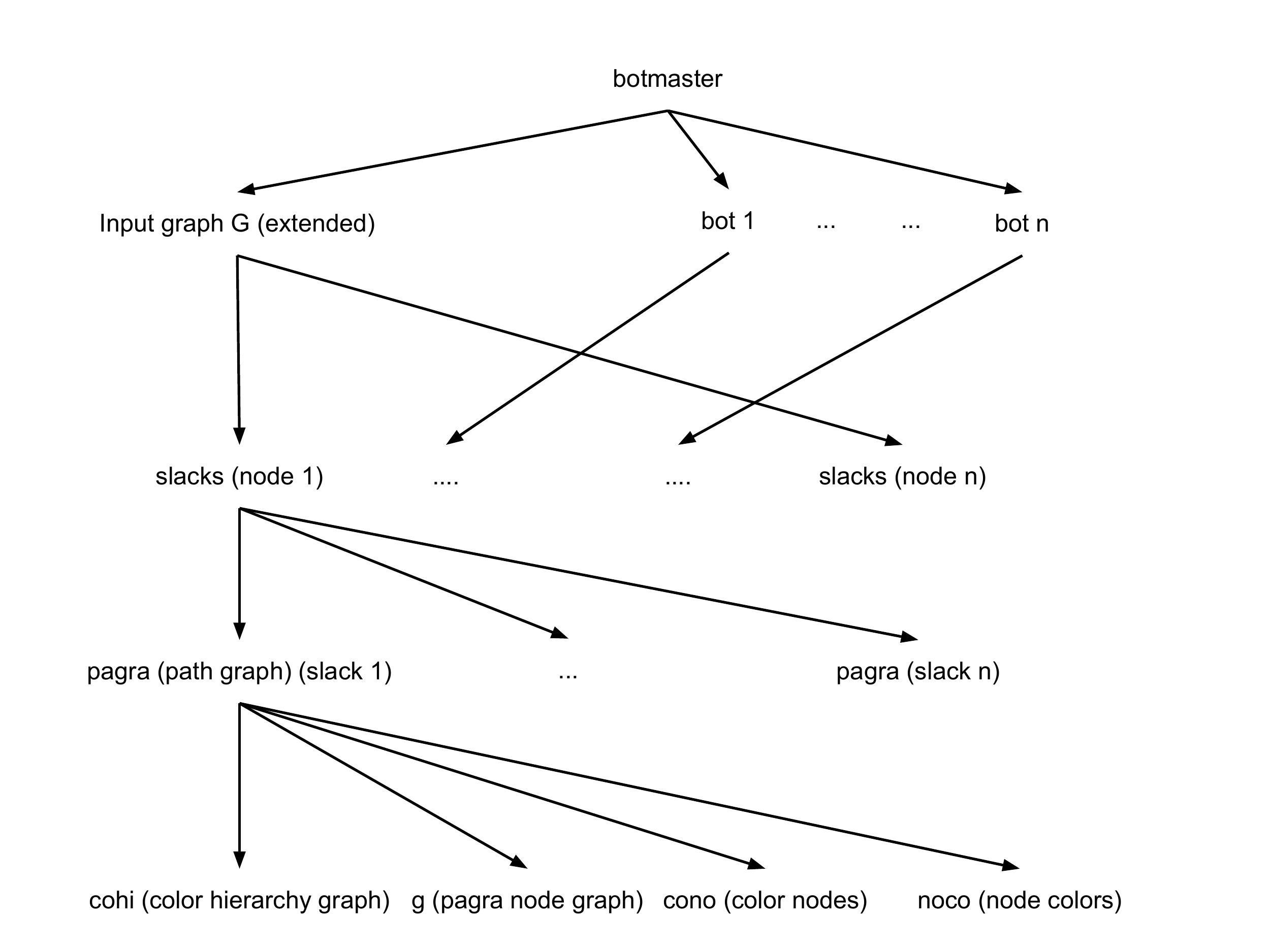}
\end{figure}

\begin{defn}
For a connected sequence of nodes $\pi$ $dup(\pi)$ is a set of repeated nodes.
\end{defn}

\begin{defn}
Denote $P'(G)$ class $sucoDFS$-traversible paths $\forall s,e \in V(cohi),X=\emptyset$.
\end{defn}

\section{Explanation}

The idea here is to categorize families of paths without explicitly describing each one of them. $Slacks$ object for a given node encodes
paths to a given node over $G$ and groups them by length (slack). For each given slack $pagra$ object is the encoding of paths given by
all paths traversible in it's own copy of graph $g$. Graph $g$ by itself is not enough to keep track of those paths due to the way it has to be updated:
via $merge$ and $reno$ operation. $reno$ operation ensures paths don't  run over the same node twice since $addDelete$ removes each node $v$ from $g$
when $bot(u,marchStep) \rightarrow v$. After that, $merge$ executes $pagra(u,slack)+pagra(v,slack)$ (if $pagra(v,slack) \neq \emptyset$).
If $g_{1}=g_{pagra(u,slack)}$ and $g_{2}=g_{pagra(v,slack)}$ both contain some node $w$ then the resulting $g$ will lose track of paths through $w$
resulting in "synthetic path" which is not a path in $G$ for a given $slack$.
To preserve actual paths, concept of color is introduced. However just the color alone is not enough since we have cases when we start with
some color, for example, "red" in $slack_{0}$ which then "splits" into $slack_{1}$ and $slack_{2}$ each one carrying it's own version of "red" which then
gets modified by their own $reno$ operations. After a while slacks combine again but now red two distinct path classes. 
We avoid exponential explosion in number of colors by using $cohi$, which associates each copy of $red$ with some own unique color
added in $slacks_{1}$, for example, "blue" and in $slacks_{2}$ - "brown". These new colors are added as part of $addDelete$ method in bot march and so their number is
"small" as shown below. When $slack_{1}+slack_{2}$ is computed, $suco$ includes $(brown,(red,...))$, $(blue,(red,...))$. Method $sudoDFS$ uses $suco$
to avoid "synthetic path" as shown below.

\section{Path existence}

\begin{lem}\label{numcolors}
Total number of colors is $|colors(suco)|=O(n^3)$. 
\end{lem}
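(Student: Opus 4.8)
The plan is to count colors at the moment of their creation. Every color is allocated by a single invocation of $paint$, and $paint$ is reached only from inside the $addSlack$ method of a path graph (Definition~\ref{pagra}); since the only operation that removes colors is $bleach$, the total number of $paint$ calls over the whole run is an upper bound for $|colors(suco)|$. So it suffices to show that $paint$ is executed at most $O(n^3)$ times. I will obtain this as a product of two separate counts: the number of $addDelete$ invocations, and the number of colors a single $addDelete$ can create.

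First I would bound the number of $addDelete$ invocations. Each $addDelete(a,b)_{botmaster}$ is triggered by exactly one bot transition $bot(a,s)\to b$, i.e.\ by one visit of a bot to the node $b$. By Remark~\ref{botvisits} every vertex is visited at most $n$ times, and there are $n$ vertices, so the total number of transitions, and hence of $addDelete$ calls, is at most $n\cdot n=O(n^{2})$.

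Next I would bound how many colors a single $addDelete$ can create. The method $addDelete$ calls $addSlack$ on the relevant $slacks$ object, which fans the call out to $pagra(\cdot,slack)$ for every $slack$ in its slack set; since $slack\in 1..n$ there are at most $n$ of these, and each $pagra$-level $addSlack$ runs $paint$ exactly once, producing a single new color $C$ with $cono(C)$ equal to the freshly added node. Thus each $addDelete$ creates at most $n$ colors, and multiplying by the $O(n^{2})$ transitions yields $O(n^{3})$ colors in total. As a consistency check, fixing the color-node $b$ this says that each of the $n$ slack graphs carries $O(n)$ colors with $cono(C)=b$, which is exactly Lemma~\ref{cncard} read per slack graph; summing over the $n$ choices of $b$ reproduces the same $O(n^{3})$.

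The main obstacle is the bookkeeping needed to turn the informal object dynamics into these exact counts. I must check that $paint$ is the unique birthplace of a color and is never invoked outside $addSlack$, that the spawning of botlets in $advance$ does not generate transitions beyond those already tallied by the per-vertex visit bound of Remark~\ref{botvisits} (the ``at most $n$ bots at any time'' invariant and the $dock$ mutex are what keep this under control), and that the fan-out of a single $slacks$-level $addSlack$ is governed by the number of slacks ($\le n$) rather than by the size of the graph $g$. The deletions performed by $bleach$ are harmless for an upper bound, but I would state explicitly that $|colors(suco)|$ is being estimated by the \emph{cumulative} number of allocations rather than an instantaneous count, so that no color is charged twice and removed colors only improve the estimate.
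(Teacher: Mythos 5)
Your argument is essentially the paper's own: both count colors as (number of vertices) $\times$ (bot visits per vertex, bounded by $O(n)$ via Remark~\ref{botvisits}) $\times$ (one $paint$ call per $pagra(v,slack)$, with $O(n)$ slack levels), yielding $O(n^3)$; you merely group the first two factors into an $O(n^2)$ bound on $addDelete$ invocations before multiplying by the slack fan-out. Your added care about $paint$ being the sole birthplace of colors and about counting cumulative allocations rather than an instantaneous census is a welcome tightening of the same argument, not a different one.
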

\begin{proof}
Indeed, a color is assigned by $paint$ method which is called for:

\begin{itemize}
\item every vertex in $v \in V(G)$ - $O(n)$
\item every bot visit to $v$ - $O(n)$ by Remark \eqref{botvisits}
\item every $pagra(v,slack)$ - $O(n)$
\end{itemize}

\end{proof}

\begin{lem}\label{mergedgs}
Let $\pi \in P'(cohi_{1}+cohi_{2})$ then 
\[
\not\exists \pi _{1} \subset P(cohi_{1}), \pi _{2} \subset P'(cohi_{2}): \pi=(\pi_{1},\pi_{2})
\]
i.e. $sucoDFS$ does not return $\pi$ that starts in $cohi_{1}$ and continue in $cohi_{2}$.
\end{lem}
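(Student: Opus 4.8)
\emph{Approach.} The plan is to argue by contradiction and to localize the failure at the unique place where a hypothetical decomposition $\pi=(\pi_{1},\pi_{2})$ would force $sucoDFS$ to step out of $cohi_{1}$ and into $cohi_{2}$. Suppose such $\pi\in P'(cohi_{1}+cohi_{2})$ exists with $\pi_{1}\subset P(cohi_{1})$ and $\pi_{2}\subset P'(cohi_{2})$. Write $a_{1}$ for the last color of $\pi_{1}$ and $a_{2}$ for the first color of $\pi_{2}$; by construction the traversal performs a transition whose source lies in $V(cohi_{1})$ and whose target $a_{2}$ lies in $V(cohi_{2})$. First I would record that the merge $cohi_{1}+cohi_{2}$ is defined by unions of $cohi,cono,noco$, so $V(cohi_{1}+cohi_{2})=V(cohi_{1})\cup V(cohi_{2})$, $E(cohi_{1}+cohi_{2})=E(cohi_{1})\cup E(cohi_{2})$, and the sub-color set $suco(C)$ is enlarged only when $C$ is common to both hierarchies.

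\emph{Key structural claim.} I would next prove that after the merge no color private to $cohi_{1}$ (that is, in $V(cohi_{1})\setminus V(cohi_{2})$) is a sub-color of any color of $cohi_{2}$; equivalently, there is no directed color-path from a $cohi_{1}$-private color to a $cohi_{2}$ color in $cohi_{1}+cohi_{2}$. The point is that an edge $c_{1}c_{2}$ of $cohi$ orients from the smaller (sub-path) color to the larger one, so $suco(\cdot)$ records ancestry, not mere graph connectivity. The edge union can only add edges of the form $w\to c$ emanating from a color $w$ shared by both hierarchies (a genuine common ancestor, the split color described in the Explanation); it never creates an edge between two colors each private to a different hierarchy. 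Consequently the only $cohi_{1}$-colors that can appear in $suco(a_{2})$ for $a_{2}\in V(cohi_{2})$ are the shared ones, exactly matching the picture $suco(brown)\ni red$ while $suco(brown)\not\ni blue$ from the Explanation.

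\emph{Applying the transition rule.} Because $\pi_{1}\subset P(cohi_{1})$ is a complete $cohi_{1}$-path, it necessarily contains a color private to $cohi_{1}$ — concretely the top color freshly allocated by $paint$ in that slack, which by the merge semantics is not identified with any color of $cohi_{2}$. Call it $c^{*}$. For $dir=up$ the rule of $sucoDFS$ licenses the step into $a_{2}$ only when $\pi_{1}\subset suco(a_{2})$; but $c^{*}\in\pi_{1}$ and, by the structural claim, $c^{*}\notin suco(a_{2})$, a contradiction. For $dir=down$ the dual rule demands $a_{2}\in suco(c)$ for every $c\in\pi_{1}$, in particular for $c=c^{*}$; since $a_{2}\in V(cohi_{2})$ is not a sub-color of the $cohi_{1}$-private $c^{*}$, this fails as well. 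Either way the traversal cannot perform the crossing, so no $\pi$ with the claimed decomposition is returned.

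\emph{Main obstacle.} The crux is the structural claim of the second paragraph: one must show rigorously that the union-based merge introduces cross-hierarchy reachability only through shared ancestor colors and never bridges two privately owned colors. This rests on two facts that have to be extracted from the bookkeeping of $paint$ and of the $+$ operation — first, that the edges of $cohi$ are consistently oriented by the sub-color (sub-path) order, and second, that the freshly painted top of each slack is genuinely private, i.e. is never identified with a color of the other hierarchy by the merge. Once these are secured the two transition-rule computations above are immediate, and the symmetry between $up$ and $down$ disposes of both definitions packaged into $P'$.
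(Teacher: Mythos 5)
Your proposal is correct and follows essentially the same route as the paper's own proof: assume the crossing exists, exhibit a color private to $cohi_{1}$ and one private to $cohi_{2}$ on $\pi$, and observe that the $sucoDFS$ transition rule would then force the private $cohi_{1}$-color into $suco$ of a $cohi_{2}$-color, which the merge semantics rule out. The only difference is one of detail: where the paper disposes of this in a single line (``impossible because $c_{1}$ is not even in $colors(cohi_{2})$''), you spell out why the union-based merge cannot bridge two privately owned colors and check the $up$ and $down$ transition rules separately, which is a welcome elaboration rather than a new argument.
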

\begin{proof}
Assume such path $\pi$ existed, then we would have 
\[
c_{1} \in colors(cohi_{1}) \backslash colors(cohi_{2}), c_{2} \in colors(cohi_{2}) \backslash colors(cohi_{1}): c_{1}, c_{2} \in \pi
\]
See Figure 2 for illustration.
But then we must have $c_{1} \in suco(c_{2})$ by rules \eqref{depcolors}.1-2 which is impossible because $c_{1}$ is not even in $colors(cohi_{2})$

\begin{figure}[H]
\centering
\label{imp}
\caption{Impossible synthetic path on merger}
\includegraphics[width=130mm, height=90mm]{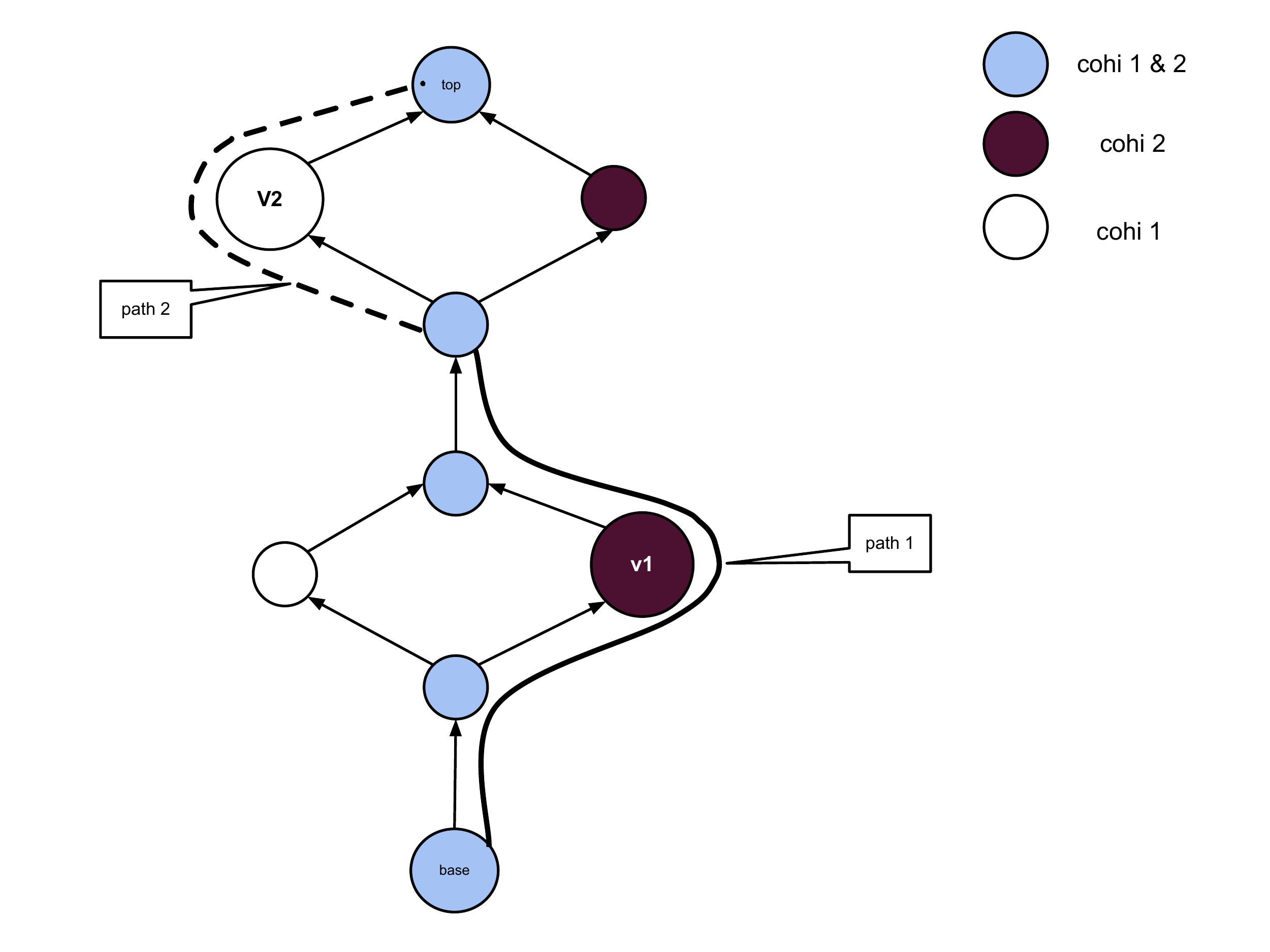}
\end{figure}

\end{proof}

\begin{lem}\label{sidenode}
Let $\gamma$ be a $sucoDFS$-traversible path, $\pi = cono(\gamma), u \in \pi$ and importantly $v \in g$ but $v \not\in \pi$, then 
\[
dep(cn(u)|cn(v))=\emptyset, \forall u \in \pi
\]
\end{lem}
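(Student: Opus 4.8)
The plan is to unfold $dep$ into a reachability condition on $cohi$ and then read that condition off directly from the traversible path $\gamma$. By Definition~\ref{depcolors}, the claim $dep(cn(u)\mid cn(v))=\emptyset$ says exactly that every color $C$ with $cono(C)=u$ possesses both a path $\pi(C,top(cohi))$ and a path $\pi(C,base(cohi))$ in $cohi$ that are disjoint from $cn(v)$. By the Remark preceding Definition~\ref{inv} (together with the fact that $sucoDFS$ returns a witness precisely when an avoiding route exists), this is equivalent to $sucoDFS(C,cn(v),suco,up)\neq\emptyset$ and $sucoDFS(C,cn(v),suco,down)\neq\emptyset$ for each such $C$. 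So the whole lemma reduces to exhibiting, for every color over $u$, a $cn(v)$-free ascent to $top(cohi)$ and a $cn(v)$-free descent to $base(cohi)$.

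First I would dispose of the color actually realized on the path. Since $u\in\pi=cono(\gamma)$, some color $C_u$ on $\gamma$ satisfies $cono(C_u)=u$. Because $cn(v)$ is by definition the set of colors $C'$ with $cono(C')=v$ and $v\notin\pi$, no color along $\gamma$ can lie in $cn(v)$; that is, $\gamma\cap cn(v)=\emptyset$. The two sub-walks of $\gamma$ issuing from $C_u$ — one toward $top(cohi)$, one toward $base(cohi)$ — are again $sucoDFS$-traversible, since each inherits the transition test $\pi_1\subset suco(a_2)$ from $\gamma$, and each avoids $cn(v)$. These are the required witnesses, so $C_u\notin dep(cn(u)\mid cn(v))$.

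The hard part will be to upgrade this from the single realized color $C_u$ to \emph{all} of $cn(u)$, since a priori some other color $C\in cn(u)$ could encode a path-class that does route through $v$ and hence depend on $cn(v)$. To close this gap I would argue by contradiction: if every ascent from such a $C$ met $cn(v)$, then a color $c\in cn(v)$ would be a mandatory ancestor of $C$ in $cohi$, so $v$ would be forced into the class of paths through $C$; but $cono(C)=cono(C_u)=u$, and Lemma~\ref{mergedgs} forbids a $sucoDFS$ walk from straddling colors that separate across a merge, which is the mechanism that would otherwise let the $v$-detour attached to $C$ be shared with the $v$-free $C_u$. I expect the delicate points to be (i) running this contradiction separately for the $up$ and $down$ directions, since $dep$ quantifies over both, and (ii) checking that the bleach invariants of Definition~\ref{pagra} — every active color keeps a direct predecessor and a route to $top(cohi)$, and inactive colors are purged — guarantee that the ascending and descending routes used above survive the $reno$/removal that produced the configuration in which $\gamma$ is traversible.
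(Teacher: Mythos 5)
Your treatment of the realized color $C_u$ coincides with the paper's entire proof: the paper simply observes that $cn(u)$ ``has path to $top(cohi)$ and that path is $\gamma$,'' and that $\gamma\cap cn(v)=\emptyset$ because $v\notin\pi=cono(\gamma)$. You are in fact slightly more careful than the paper, since you supply witnesses in both the $up$ and $down$ directions (Definition~\ref{depcolors} requires both), whereas the paper's proof only mentions the ascent to $top(cohi)$.

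The genuine gap is exactly the one you flag yourself and then do not close: $dep(cn(u)\mid cn(v))=\emptyset$ quantifies over \emph{every} color $C$ with $cono(C)=u$, while $\gamma$ only certifies the color(s) of $u$ that actually lie on $\gamma$. Your proposed contradiction argument does not work as written. From the assumption that every ascent from some other $C\in cn(u)$ meets $cn(v)$ you conclude that $v$ is ``forced into the class of paths through $C$,'' but that is not a contradiction --- it is precisely the situation in which $C$ legitimately belongs to $dep(cn(u)\mid cn(v))$: a color over $u$ encoding path-classes that detour through $v$, which is exactly what $bleach$ is designed to remove. Nothing about $cono(C)=cono(C_u)=u$ rules this out, and Lemma~\ref{mergedgs} concerns a single $sucoDFS$ walk straddling two merged hierarchies, which is a different configuration from a stray color over $u$ having all its routes blocked by $cn(v)$. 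So the second half of your argument is a statement of intent rather than a proof. To be fair, the paper's own proof silently conflates $cn(u)$ with the single color realized on $\gamma$ and never addresses the remaining colors either, so your proposal is no weaker than the published argument --- but neither one establishes the lemma as literally stated, and the downstream use in Theorem~\ref{hpath} only ever needs the colors of $\gamma$ itself, which is the part you (and the paper) do prove.
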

\begin{proof}
By definition of $dep$, $\forall u \in \pi$, $cn(u)$ has path to $top(cohi)$ and that path is $\gamma$ and $\gamma \cap cn(v) = \emptyset$ 
therefore $dep(cn(u)|cn(v))=\emptyset$.
\end{proof}

\begin{lem}\label{cohi_to_G}
Let
\[
\pi=cono(sucoDFS(top(cohi_{pagra(b,j)}),\emptyset,suco,down))
\]
then $dup(\pi)=\emptyset, |\pi|=j$.
\end{lem}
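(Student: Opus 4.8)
The plan is to read both conclusions off the shape that the $dir=down$ traversal of $cohi_{pagra(b,j)}$ is forced to take. Recall that a color $C$ with $cono(C)=v$ encodes a class of $s\!-\!v$ paths in $g$, that $suco(C)$ collects the sub--colors of $C$ (shorter prefix--classes), and that for $dir=down$ the routine runs DFS on $inv(cohi)$ from $top(cohi)$, appending a color $a_2$ to the current color--path $\pi_1$ only when $a_2\in suco(c)$ for \emph{every} $c\in\pi_1$, halting upon reaching $base(cohi)$.

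First I would pin down the layered shape the construction maintains. By the $paint$ convention $cohi(C):=top(cohi)$, each $addSlack$ links the freshly allocated frontier color directly above the previous one, so in $cohi_{pagra(b,j)}$ the color $top(cohi)$ carries the full length--$j$ class ending at $b$, the color $base(cohi)$ carries $emptycolor$ (the class $(s)$), and a direct edge joins classes differing by exactly one node. Hence every step of the $inv(cohi)$--DFS strips exactly one node, and the walk from $top$ down to $base$ is a chain $a_0=top,a_1,\dots,a_{j-1}=base$ of exactly $j$ colors; applying $cono$ gives $|\pi|=j$.

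Next I would use the quantified transition rule to show this chain is \emph{totally} nested under $suco$, not merely linked edge--by--edge. Because $a_k$ is admitted only when $a_k\in suco(a_i)$ for all $i<k$, each later color is a sub--class of every earlier one, so the $a_i$ are nested prefix--classes of a single $s\!-\!b$ path $P$ of length $j$, and the values $cono(a_i)$ are precisely the successive endpoints of the prefixes of $P$. To get $dup(\pi)=\emptyset$ I would suppose $cono(a_i)=cono(a_k)=v$ with $i<k$; then $a_k\in suco(a_i)$ forces the $a_k$--class to be a proper prefix of the $a_i$--class while both terminate at $v$, which means $P$ reaches $v$ twice, i.e.\ $P$ is a synthetic path. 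Lemma~\ref{mergedgs} rules out any $sucoDFS$ output assembled across a merge at a shared color, and Lemma~\ref{sidenode} shows every vertex off the decoded path has empty dependency, so no revisiting color can survive in the chain; the contradiction yields distinct $cono(a_i)$ and hence $dup(\pi)=\emptyset$.

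The hard part will be the no--duplicate step, since it rests on the assertion that the $suco$ relation is \emph{exactly} strong enough to forbid a repeated $cono$ value. Making this airtight means tracking how $reno$ and $bleach$ delete, through $dep$, every color whose path--class would reuse an already--removed node, so that the surviving $suco$ order truly encodes prefix--nesting of a simple $G$--path rather than a merged artifact. Once that invariant is established, the totally--ordered chain picture delivers both $|\pi|=j$ and $dup(\pi)=\emptyset$ at once.
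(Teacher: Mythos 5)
Your overall skeleton matches the paper's: for $|\pi|=j$ you argue the $paint$/$addSlack$ construction layers $cohi$ so that any $base$-to-$top$ chain has length $j$, and for $dup(\pi)=\emptyset$ you suppose two colors $a_i,a_k$ on the traversed chain with $cono(a_i)=cono(a_k)$ and use the $dir=down$ transition rule to conclude the later one lies in $suco$ of the earlier. Up to that point you are reproducing the paper's argument. But you do not close the contradiction the way the paper does, and the substitute you offer does not work. The paper's decisive (if terse) step is a direct appeal to the construction of $suco$: two colors standing in the sub-color relation cannot share the same $cono$ value, essentially because $reno$ deletes a node from $g$ on the first $bot$ visit within a given $pagra$, so no later path-class in that $pagra$ can terminate at the same node as one of its ancestor classes. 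You instead invoke Lemma~\ref{mergedgs} and Lemma~\ref{sidenode} to "rule out" the repeated-$cono$ configuration. Neither lemma speaks to this situation: Lemma~\ref{mergedgs} forbids a $sucoDFS$ path from straddling $colors(cohi_1)\setminus colors(cohi_2)$ and $colors(cohi_2)\setminus colors(cohi_1)$ across a merge, and Lemma~\ref{sidenode} concerns $dep(cn(u)|cn(v))$ for a vertex $v$ \emph{off} the decoded path; a repeated vertex on a single chain inside one $cohi$ is a different configuration from both. You then concede the point yourself ("making this airtight means tracking how $reno$ and $bleach$ delete\dots so that the surviving $suco$ order truly encodes prefix-nesting of a simple $G$-path"), which is precisely the invariant the conclusion needs; deferring it leaves the no-duplicate half unproved.

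A smaller omission on the $|\pi|=j$ half: your chain argument assumes every color on the traversed path has a direct predecessor one layer down, but colors are deleted by $bleach$ during $reno$, so you must say why no "gaps" appear in the layering after deletions. The paper covers this by citing the $bleach$ clause that every color $C\in cohi$ either has a direct predecessor or is itself removed, together with $cono$ surjectivity. You should add that citation rather than relying only on the freshness of $paint$ at insertion time.
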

\begin{proof}
Assume there was an intersection $dup(\pi) \neq \emptyset$ then we have $c_{1},c_{2}: cono(c_{1})=cono(c_{2})$ however if $c_{2}$ was traversed after $c_{1}$ then 
by \eqref{depcolors}.1 observe that $c_{1} \in suco(c_{2})$ but by $suco$ construction this is impossible if $cono(c_{1})=cono(c_{2})$.

$|\pi|=j$ folows from $bleach$ method definition: every color $C \in cohi$ has a direct predecessor or $bleach$ removes $C$, also $paint$ method and $cono$
surjectivity ensures that any color path in $cohi_{pagra(b,j)}$ from $base$ to $top$ is $j$ long.

\end{proof}

Lemma \eqref{cohi_to_G} allows us to recover final H-path from $cohi_{pagra(e,n-1)}$ using standard $sucoDFS$ function. That is if such path 
is included in $cohi_{pagra(e,n-1)}$, let's show this is indeed the case:

\begin{thm}\label{hpath}
If $P(G) \neq \emptyset$ then $\exists \pi \in P(G)$:
\[
 \pi=cono(sucoDFS(top(cohi_{pagra(e,n-1)}),\emptyset,suco,down))
\]
\end{thm}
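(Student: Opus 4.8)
The plan is to combine the two halves already assembled in the preceding lemmas: a \emph{completeness} direction showing that whenever a Hamiltonian path exists it is encoded as a $base$-to-$top$ color chain inside $cohi_{pagra(e,n-1)}$, and a \emph{soundness} direction (essentially furnished by Lemma \eqref{cohi_to_G}) showing that whatever $sucoDFS$ returns is a duplicate-free path of the correct length, hence Hamiltonian. Since the statement only asks for the existence of \emph{some} $\pi \in P(G)$, the recovered path need not be the one I start with.

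First I would fix any $\pi^{*} \in P(G)$; such a path exists by hypothesis. By Remark \eqref{bothist} there is a $bot$ with $hist(bot)=\pi^{*}$, so during the march every prefix $(s,\dots,v)$ of $\pi^{*}$ is realized and each corresponding $addDelete$ call fires its $paint$ operation. I would then argue by induction on the march step (equivalently on slack) that these $paint$ calls deposit a chain of colors $\gamma^{*}=(c_{0},\dots,c_{n-1})$ in $cohi$ with $cono(c_{i})$ equal to the $i$-th node of $\pi^{*}$, where $c_{0}$ is the $emptycolor$ at $s$ and $c_{n-1}$ is a color over $e$. Because $\pi^{*}$ has length $n-1$, this chain lives in the slack-$(n-1)$ path graph, i.e. in $cohi_{pagra(e,n-1)}$.

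The second, and I expect hardest, step is to show that $\gamma^{*}$ survives every $reno$/$bleach$ operation and remains $sucoDFS$-traversible. Since $\pi^{*}$ is a genuine simple path, none of its colors meets the inactivity conditions of Definition \eqref{inactivecolors}; and by Lemma \eqref{sidenode} every color on $\pi^{*}$ satisfies $dep(cn(u)\mid cn(v))=\emptyset$ for each off-path node $v$, so $bleach$ never places a color of $\gamma^{*}$ into its dependency set $D$. Lemma \eqref{mergedgs} guarantees that the merges performed along the way never splice $\gamma^{*}$ together out of colors belonging to disjoint sub-hierarchies, so the chain reaching $e$ is the authentic one rather than a synthetic artifact. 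Finally, since by construction each $c_{i}$ records its true predecessors along $\pi^{*}$ in $suco(c_{i})$, the $down$-direction transition rule of $sucoDFS$ is satisfied at every step, and $\gamma^{*}$ is a legal $top$-to-$base$ branch of the search.

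To close, I observe that $sucoDFS$ is a depth-first search with backtracking that returns $\emptyset$ only once all colors are exhausted; the existence of the traversible branch $\gamma^{*}$ therefore forces $sucoDFS(top(cohi_{pagra(e,n-1)}),\emptyset,suco,down)$ to return some nonempty color path $\gamma$. Applying Lemma \eqref{cohi_to_G} with $b=e$, $j=n-1$ to this $\gamma$ yields $\pi=cono(\gamma)$ with $dup(\pi)=\emptyset$ and $|\pi|=n-1$. A duplicate-free path from $s$ to $e$ realizing slack $n-1$ spans all $n$ vertices of $G$ and visits each exactly once, so $\pi \in P(G)$, which is precisely the claimed conclusion.
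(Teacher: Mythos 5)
Your proposal follows essentially the same route as the paper: an induction on the march step tracking the color chain deposited along a fixed Hamiltonian path $\pi^{*}$, with survival under $merge$ justified by Lemma \eqref{mergedgs}, survival under $reno$/$bleach$ justified by Lemma \eqref{sidenode}, and the final recovery of a duplicate-free length-$(n-1)$ path via Lemma \eqref{cohi_to_G} (which the paper invokes in the paragraph immediately preceding the theorem). Your explicit separation into a completeness half and a soundness half, noting that $sucoDFS$ may return a branch other than $\gamma^{*}$ yet still yields a Hamiltonian path, is a slightly cleaner packaging of the same argument.
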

\begin{proof}
Using induction by march step, this trivially holds at node $s$, $marchStep=1$, assume at $marchStep=k$ we have: 
\[
\gamma_{j} \equiv sucoDFS(top(cohi_{pagra(v_{j},j)},\emptyset,suco,down) 
\]
Let $\pi_{j} \equiv cono(\gamma_{j})$.
When $bot(v_{j},j) \rightarrow v_{j+1}$ we can show that, for $\pi_{j+1} \equiv (\pi_{j},v_{j+1})$
\[
\pi_{j+1} = cono(sucoDFS(top(cohi_{pagra(v_{j+1},j+1)},\emptyset,suco,down))
\]

Indeed $bot(v_{j},j) \rightarrow v_{j+1}$ invokes $addDelete$ which invokes methods $merge$ and $reno$, $merge$ does not affect
$\pi_{j}$ by lemma \eqref{mergedgs} while if $v_{j+1} \in V(g_{pagra(v_{j},j)})$ $reno$ will invoke $bleach(dep(cn(v)|cn(v_{j+1}))),\forall v \in \pi_{j}$ 
but since $\pi_{j}$ is $sucoDFS$ traversable and $v_{j+1} \not\in \pi_{j}$ therefore by lemma \eqref{sidenode}:
\[
colors(\gamma_{j}) \not\subset dep(cn(v)|cn(v_{j+1}), \forall v \in \pi_{j}
\]
therefore $\gamma_{j}$ and therefore $\pi_{j}$ is not affected by $bot(v_{j},j) \rightarrow v_{j+1}$. 
\end{proof}

\section{Algorithm}

\begin{itemize}
\item Start with one bot which current node set to $s$
\item $slacks$ object and the associated $pagra$ objects are initialized
	\begin{itemize}
		\item $slacks$ consists of just one $pagra$
		\item $g_{pagra}$ consists of only one node $s$
		\item $cohi$ consists of just $emptycolor$
		\item $noco$ is just $(s,emptycolor)$ pair
		\item $cono$ is $(emptycolor,s)$
	\end{itemize}
\item Bot advances $s \rightarrow v_{1}$ according to it's rules, and invokes $addDelete$ function with updates $slacks(v_{1})$
\item Steps above are repeated, until $pagra(e,n-1) \in slacks(e)$, i.e. we have a path graph with slack equal to $|V(G)|- 1$.
\item Extract the H-path by computing 
\[
cono(sucoDFS(top(cohi_{pagra(e,n-1)}),\emptyset,suco,down))
\]
\end{itemize}

\section{Complexity}

Time complexity is based on that:

\begin{itemize}
\item there are $O(n)$ $G$ nodes 
\item visited by bots at most $O(n)$ times
\item each visit requires $runo$ call for each of $O(n)$ graphs $pagra(v,slack)$ 
\item $reno$, requires $sucoDFS$ call for each color $a$ of at most $O(n)$ sized (lemma \eqref{cncard}) $cn(v)$ color set 
\item $sucoDFS(a,cn(b),suco,dir)$ complexity is $O(n^3)$ since:
	\begin{itemize}
		\item any $sucoDFS$-traversible path in $cohi$: $\pi(base(cohi),top(cohi))$ is $O(n)$ long by lemma \eqref{cohi_to_G}
		\item set $X=cn(b)$ cardinality is at most $O(n)$ by lemma \eqref{cncard} and therefore $DFS$ requires at most $O(n)$ rollbacks when $X$ color is encountered 
		\item validating conditions \eqref{depcolors}.1,2 on every step costs $O(n)$
	\end{itemize}
\item which is executed for each $O(n)$ nodes $v \in g$ to decide if the node is bleached 
\end{itemize}

Space complexity is based on that:

\begin{itemize}
\item there are $O(n)$ $G$ nodes 
\item each node has $O(n)$ $pagra$ objects
\item each $pagra$ object has $cohi$ as biggest attribute 
\item each $cohi$ attribute has at most $O(n^3)$ colors
\end{itemize}

Thus total current worst case time complexity is $O(n^8)$ and space complexity $O(n^5)$. Average case complexity is typically much lower because in practice $cn(v)$ is $O(1)$ sized and average number
of node bot revists is also $O(1)$.

$sucoDFS(a,cn(b),suco,dir)$ complexity can be lowered by $O(n)$ if we eliminate $cn(b)$ from $cohi$ first which is done once for all colors $a$. 
Than for each $a$ we only have to run $sucoDFS(a,\emptyset,suco,dir)$ in $O(n)$.

Complexity can likely be further lowered by an additional $O(n)$ by consolidating all $cohi$ per $G$ node and across $pagra$ of different slack levels. 
Slack values can be  extracted from $cohi$ directly when necessary.

Additionally $dep(cn(b)|cn(a))$ can possibly be found faster if we start with an  encoding of all $sucoDFS$-traversible paths $P$ to top.
Then on every update of $cohi$ instead or rerunning $sucoDFS$ we could just update $P$.

\section{Implementation and testing}

The algorithm was tested on 10000 randomly generated graphs each with $17$ nodes and $3$ outbound edges per node. Test graphs were generated using 
function $dg::gen\_graph$ which is part of the source code and is as follows;

\begin{itemize}
\item Generate a random H-path first. Sample without replacement from a discrete uniform random distribution over $1...n: U(1...n)$.
\item For each node on selected H-path generate given number of edges $\delta$ by choosing vertex 1 sequentially from the path and then choosing vertex 2 
by sampling without replacement from $U(1...n)$ $\delta$ times.
\end{itemize}

Please note that performance tuning was not a priority therefore the author is aware of a number of inefficiencies. 

The algorithm is available for download at 

https://docs.google.com/folder/d/0B3s6PXhKJO6HWnE4c0VZbVJodDQ/edit . 

It is free to use for academic and educational purposes use but requires a license for commercial and government use.
Potential commercial and government users should note the algorithm has a Patent Pending status with USPTO.

\newpage

\section{Appendix}

\begin{figure}[H]
\caption{Source Code}
\centering
\includegraphics[width=210mm, height=240mm,page=1]{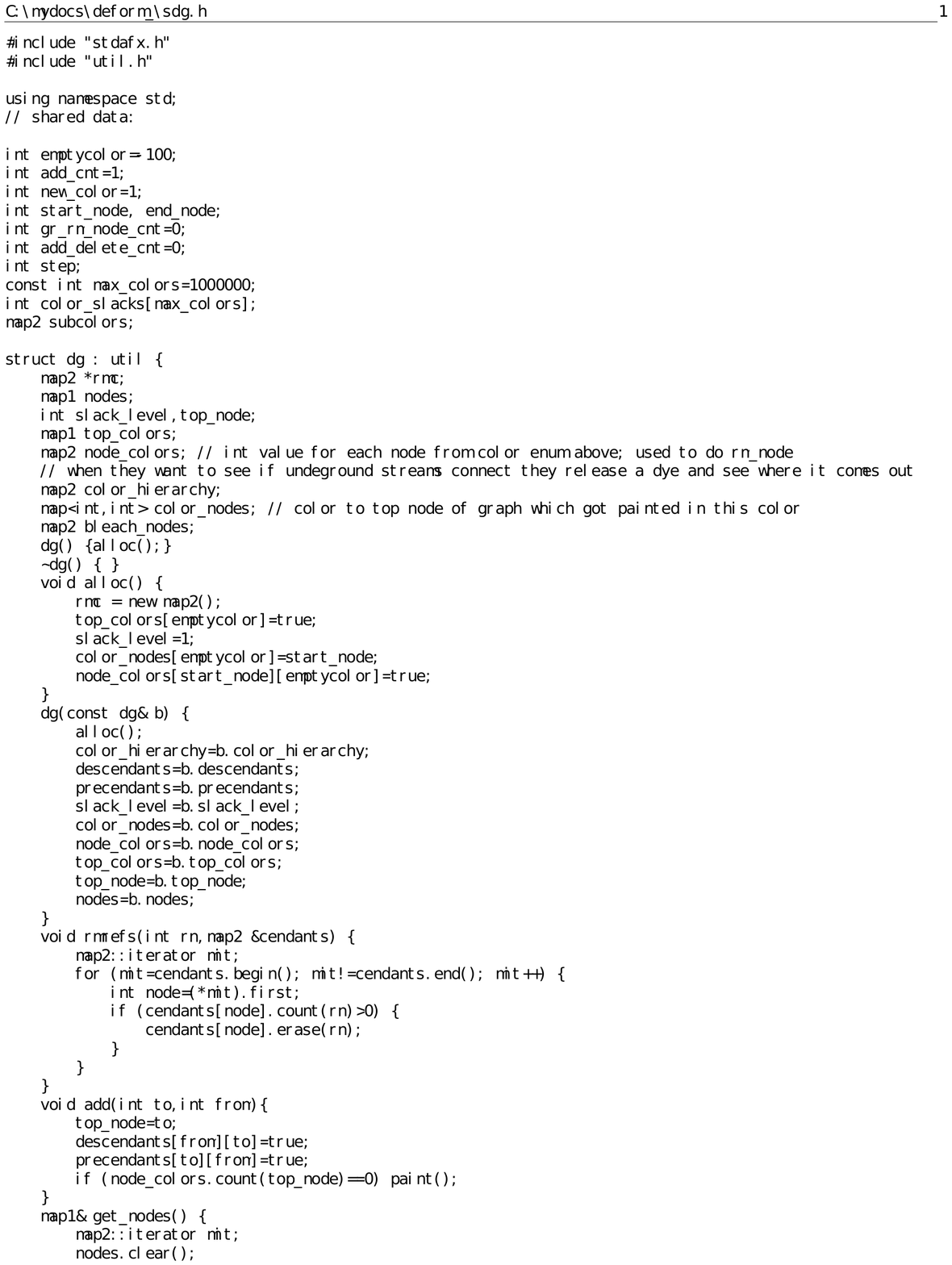}
\end{figure}
\begin{figure}[H]
\centering
\includegraphics[width=210mm, height=240mm,page=2]{sdg}
\end{figure}
\begin{figure}[H]
\centering
\includegraphics[width=210mm, height=240mm,page=3]{sdg}
\end{figure}
\begin{figure}[H]
\centering
\includegraphics[width=210mm, height=240mm,page=4]{sdg}
\end{figure}
\begin{figure}[H]
\centering
\includegraphics[width=210mm, height=240mm,page=5]{sdg}
\end{figure}
\begin{figure}[H]
\centering
\includegraphics[width=210mm, height=240mm,page=6]{sdg}
\end{figure}
\begin{figure}[H]
\centering
\includegraphics[width=210mm, height=240mm,page=7]{sdg}
\end{figure}
\begin{figure}[H]
\centering
\includegraphics[width=210mm, height=240mm,page=8]{sdg}
\end{figure}
\begin{figure}[H]
\centering
\includegraphics[width=210mm, height=240mm,page=9]{sdg}
\end{figure}
\begin{figure}[H]
\centering
\includegraphics[width=210mm, height=240mm,page=10]{sdg}
\end{figure}

\begin{figure}[H]
\centering
\includegraphics[width=210mm, height=240mm,page=1]{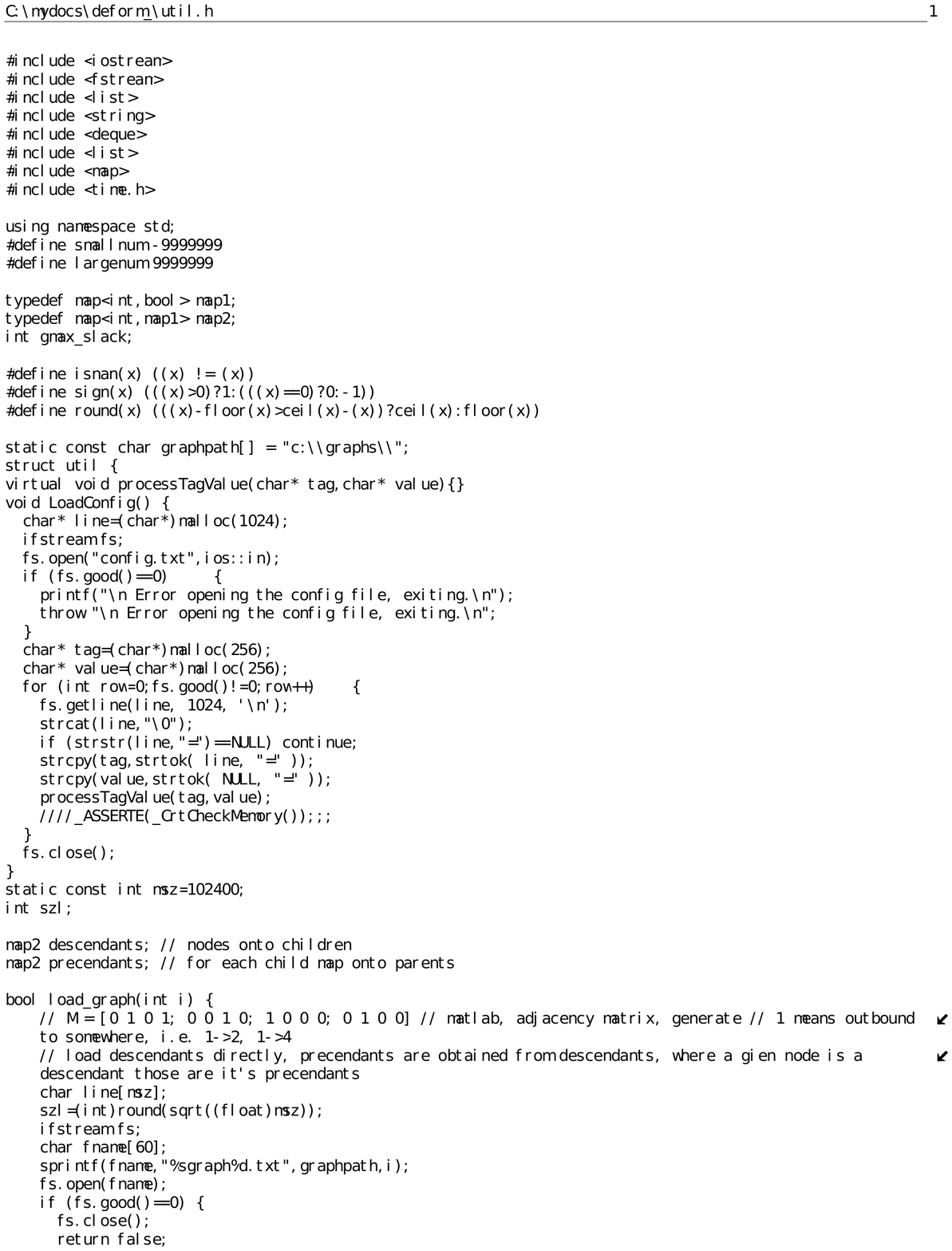}
\end{figure}
\begin{figure}[H]
\centering
\includegraphics[width=210mm, height=240mm,page=2]{util}
\end{figure}
\begin{figure}[H]
\centering
\includegraphics[width=210mm, height=240mm,page=3]{util}
\end{figure}
\begin{figure}[H]
\centering
\includegraphics[width=210mm, height=240mm,page=4]{util}
\end{figure}
\begin{figure}[H]
\centering
\includegraphics[width=210mm, height=240mm,page=5]{util}
\end{figure}
\begin{figure}[H]
\centering
\includegraphics[width=210mm, height=240mm,page=6]{util}
\end{figure}

\begin{figure}[H]
\centering
\includegraphics[width=210mm, height=240mm]{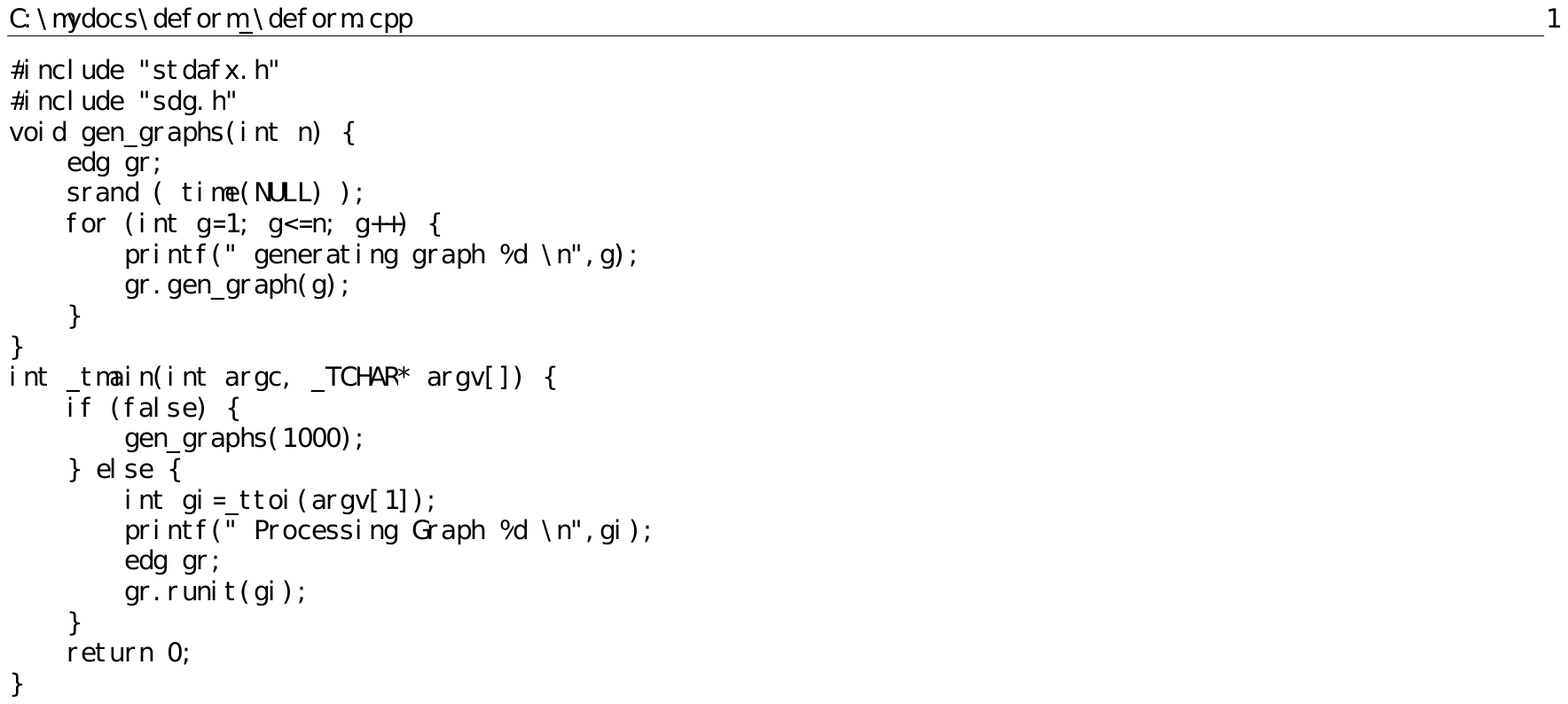}
\end{figure}

\end{document}